\documentclass[journal,onecolumn]{IEEEtran}
\usepackage[T1]{fontenc}
\usepackage{amsmath,amssymb,amsfonts,amsthm,mathtools}
\usepackage{booktabs,array,graphicx,microtype}
\usepackage{physics}
\usepackage{cite}
\usepackage{aliascnt}
\usepackage[hidelinks]{hyperref}
\usepackage[capitalise,noabbrev]{cleveref}
\hypersetup{pdftitle={From Steane to A7: Quantum Codes from Invariant States},pdfauthor={Ian Teixeira}}
\newtheorem{theorem}{Theorem}
\newaliascnt{lemma}{theorem}\aliascntresetthe{lemma}
\newaliascnt{proposition}{theorem}\newtheorem{proposition}[proposition]{Proposition}\aliascntresetthe{proposition}
\newaliascnt{corollary}{theorem}\newtheorem{corollary}[corollary]{Corollary}\aliascntresetthe{corollary}
\theoremstyle{definition}
\newaliascnt{definition}{theorem}\newtheorem{definition}[definition]{Definition}\aliascntresetthe{definition}
\theoremstyle{remark}
\newaliascnt{remark}{theorem}\aliascntresetthe{remark}
\DeclareMathOperator{\End}{End}
\DeclareMathOperator{\supp}{supp}
\DeclareMathOperator{\wt}{wt}

\newcommand{\SU}{\mathrm{SU}}

\newcommand{\GL}{\mathrm{GL}}
\newcommand{\PSL}{\mathrm{PSL}}
\newcommand{\AGL}{\mathrm{AGL}}
\newcommand{\CC}{\mathbb C}
\newcommand{\RR}{\mathbb R}

\newcommand{\FF}{\mathbb F}
\newcommand{\ZZ}{\mathbb Z}
\newcommand{\G}{\mathsf G}
\newcommand{\tA}{\tilde{A}_7}
\newcommand{\cC}{\mathcal C}
\newcommand{\cA}{\mathcal A}
\newcommand{\cN}{\mathcal N}

\newcommand{\enc}{W}
\allowdisplaybreaks[1]
\title{From Steane to $A_7$: Quantum Codes from Invariant States}
\author{Ian Teixeira\thanks{I. Teixeira is with the Department of Mathematics, University of California San Diego, La Jolla, CA 92093 USA (e-mail: iteixeira@ucsd.edu).}}
\begin{document}
\maketitle
\begin{abstract}
We construct a two-parameter family of single-error-correcting seven-ququart codes with transversal $\tA$ symmetry, realizing the finite component of a two-qubit super-golden gate set. These $((7,4,3))_4$ codes encode two logical qubits and support non-Clifford operations by applying the same gate to each physical ququart.

Our method constructs invariant parent states rather than searching directly for a codespace. For $n$ systems of local dimension $q$, suppose the local symmetry is a unitary $t$-group and the permutation symmetry is $t$-transitive. Maximal mixing of all marginals on at most $t$ sites is then equivalent to linear equations in Schur--Weyl sector weights. Every feasible solution yields parents of pure distance at least $t+1$; puncturing one site produces a $q$-dimensional code of distance at least $t$ with the prescribed local symmetry acting transversally.

Steane and the nonstabilizer Fake Steane code reveal why the construction leaves continuous freedom. They share the same prescribed transversal Clifford and permutation symmetries, and their eight-qubit purifications lie on a circle: error correction fixes two invariant-sector norms but leaves the relative phase free. Keeping the same permutation group and replacing Clifford symmetry by $\tA\leq\SU(4)$ gives three fixed sector norms and two free relative phases, producing the new ququart family.
\end{abstract}
\begin{IEEEkeywords}
quantum error correction, transversal gates, invariant states, permutation symmetry, unitary $t$-groups
\end{IEEEkeywords}

\section{Introduction}

A gate group may be attractive for efficient compilation even when no error-correcting code is known that realizes its gates transversally. Super-golden gate constructions make this question concrete: can their distinguished finite symmetry be implemented transversally, so that as many of the gates as possible are realized in a naturally fault tolerant manner? For the exceptional $A_7$ symmetry appearing in one of the most efficient two-qubit super-golden gate sets~\cite{dalal}, we answer this question with a two-parameter family of seven-ququart codes that correct arbitrary single-ququart errors.

In general, prescribing the symmetry does not by itself come anywhere close to enforcing error correction. Even after restricting the encoding to carry a chosen transversal action, the Knill--Laflamme conditions leave nonlinear equations~\cite{knilllaflamme,eastinknill}. However, previous work showed that, in certain very special cases, finite-design and representation-theoretic conditions can make portions of these equations automatic~\cite{exotic,twisted}. Here local unitary-design symmetry and a highly transitive permutation group reduce to linear equations in the squared norms of invariant components. Although the reduction here is not quite as strong, linear systems are still highly tractable and the new method applies to more symmetries and higher distances, so the tradeoff is strong, allowing us to construct the first quantum codes with $ \tA $ transversal gates (here $\tA\leq\SU(4)$ is the chosen matrix lift with projective image $A_7$; its generators and scalar convention are given in Appendix~\ref{app:a7-explicit}).

\begin{theorem}[Seven-ququart $\tA$ codes]\label{thm:main}
There is a two-parameter family of $((7,4,3))_4$ quantum codes and encoding isometries
\[
\enc_{\theta,\varphi}:\CC^4\longrightarrow(\CC^4)^{\otimes7}
\]
such that
\[
g^{\otimes7}\enc_{\theta,\varphi}
=
\enc_{\theta,\varphi}\,\overline g
\qquad(g\in\tA).
\]
The single-ququart physical representation $g\mapsto g$ and the logical representation $g\mapsto\overline g$ are faithful four-dimensional irreducible representations of the same group $\tA$. Viewed as two logical qubits, the codes admit transversal non-Clifford operations.
\end{theorem}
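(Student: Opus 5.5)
The proof instantiates the parent-state method described in the abstract with $n=8$, $q=4$ and $t=2$. Let the local group be $\tA\leq\SU(4)$. Take as permutation group a $2$-transitive group on $8$ points that contains the Steane-type symmetry, for instance $\AGL(3,2)$ acting on $\FF_2^3$, which is in fact $3$-transitive.

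\textbf{Step 1: design input.} First I check that the four-dimensional representation of $\tA$ is a unitary $2$-design. Since $A_7$ acts irreducibly on $\mathrm{Sym}^2\CC^4$ and on $\wedge^2\CC^4$, the ratio of averages $\frac{1}{|G|}\sum|\mathrm{tr}\,g|^4=2$ can be confirmed from the character table of $\tA$. With this in hand, the reduction lemma stated earlier applies. A state $\psi\in(\CC^4)^{\otimes8}$ invariant under $g^{\otimes8}$ (with the correct scalar) and under the permutation group has all two-site marginals maximally mixed if and only if its Schur--Weyl sector weights satisfy the linear equations. The one-site marginals are then automatically maximally mixed.

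\textbf{Step 2: solve the linear system.} Decompose the space of $\tA\times\AGL(3,2)$-invariants of $(\CC^4)^{\otimes8}$ into Schur--Weyl isotypic pieces. By the abstract, after fixing sectors there are three relevant invariant vectors $v_1,v_2,v_3$, one per sector. I then write $\psi=\sum_i r_ie^{i\alpha_i}v_i$. The linear equations in the $r_i^2$ should have a unique nonnegative solution, and I would exhibit it explicitly. The global phase is irrelevant, so two relative phases $(\theta,\varphi)$ remain free, and every choice gives a parent of pure distance $\ge3$. I expect this step to be the main obstacle. It requires computing the invariant sectors and their marginal contributions (Weingarten-type coefficients) concretely, and checking that the resulting linear system is feasible with strictly positive weights. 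I would do this computationally using the explicit generators of Appendix~\ref{app:a7-explicit}, then certify it by exact arithmetic.

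\textbf{Step 3: puncture and read off the symmetry.} Puncturing one site gives $\enc_{\theta,\varphi}:\CC^4\to(\CC^4)^{\otimes7}$, defined by $\enc|j\rangle=\sqrt4\,(\langle \bar j|\otimes I)\psi$. This map is an isometry because the one-site marginal is maximally mixed. The puncturing lemma gives distance $\ge t=2$ at once. Distance $3$ needs slightly more. The parent has pure distance $\ge3$, and the $3$-transitivity of $\AGL(3,2)$ lets the same sector equations force all $3$-site marginals to be maximally mixed; if this fails, I would impose the $t=3$ equations directly in Step 2. Invariance of $\psi$ under $g^{\otimes8}$ translates into $g^{\otimes7}\enc=\enc\,\overline g$, where $\overline g$ is the complex conjugate (or contragredient) of the action on the punctured site. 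That representation is again a faithful four-dimensional irreducible representation of $\tA$, and it matches the paper's ``same group'' claim up to the chosen scalar convention. Finally, faithfulness implies that the logical $\tA$ is not contained in the two-qubit Clifford group, since $A_7$ does not embed in the projective Clifford group. Hence some transversal gates are logically non-Clifford.
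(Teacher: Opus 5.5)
\textbf{Gap: distance $3$ is not established.} Working with $t=2$, the construction only controls marginals on at most two sites. It gives a parent of pure distance $\ge3$ and a puncture of distance $\ge2$. For the puncture to correct one error, every three-site marginal of the parent must equal $I/64$, and $3$-transitivity of $\AGL(3,\FF_2)$ does not deliver this. For $|S|=3$ it only says that $\rho_S$ commutes with the induced $S_3$ action and with $g^{\otimes3}$. So $\rho_S$ lies in the $S_3$-invariant part of $\End_{\tA}(V^{\otimes3})$. Permutation moments pin $\rho_S$ down only if this commutant equals $\operatorname{im}\CC[S_3]$, i.e.\ only if $\tA$ is a unitary $3$-group. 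Otherwise $\rho_S$ can have components that no sector-weight equation sees, so imposing the three-cycle equation in your fallback is necessary but not sufficient.

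Your sentence that ``the same sector equations force all $3$-site marginals to be maximally mixed'' is also false as stated. At $t=2$ the system is normalization plus $-\frac27p_{2222}+\frac27p_{44}+p_8=\frac14$: two equations in three unknowns. Its feasible set is the segment from $(p_{2222},p_{44},p_8)=(\frac7{12},0,\frac5{12})$ to $(\frac1{16},\frac{15}{16},0)$, so your Step 2 claim of a unique solution fails. Every point of that segment except one violates the three-cycle moment $-\frac1{14}(p_{2222}+p_{44})+p_8=\frac1{16}$. Since that moment condition is necessary, those parents certainly do not have maximally mixed three-site marginals.

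\textbf{Missing ingredient.} The four-dimensional representation of $\tA$ is a unitary $3$-group. You can certify this, alongside your fourth-moment check, by $|\tA|^{-1}\sum_g|\Tr g|^6=3!=6$. This is what the paper uses: it applies the construction theorem with $t=3$. Normalization, the transposition equation and the three-cycle equation (target $1/16$) then have the unique solution $(\frac7{32},\frac{21}{32},\frac18)$. Every point of the resulting two-torus has pure distance $\ge4$, and the puncturing corollary gives distance $\ge3$ directly. Incidentally, only character ratios are needed here, not Weingarten coefficients.

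The rest of your outline agrees with the paper. The covariance $g^{\otimes7}\enc=\enc\,\overline g$ follows from invariance of the parent. Non-Cliffordness follows because $7$ divides $|A_7|$ but not the projective two-qubit Clifford order $11520$.
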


Each code encodes one four-level system into seven and corrects an arbitrary error on one physical four-level system. Identifying each ququart with a pair of qubits gives an encoding of two logical qubits into fourteen physical qubits, with transversal gates acting on seven disjoint pairs. This realizes the finite component of the super-golden construction; implementing the additional gates completing that universal set remains a separate problem.

The basic phenomenon exploited here is already visible in a familiar setting: we show that Steane's $[[7,1,3]]$ code \cite{steane}, as well as a nonstabilizer code with the same transversal Clifford and \(\GL(3,\mathbb F_2)\) permutation symmetries, arise from two points in a continuous family of invariant parent states. In general, for a local unitary $t$-group $\G\leq\SU(q)$ and a $t$-transitive permutation group $ \Sigma \leq S_n $, the same reasoning becomes an existence and construction theorem. Local design symmetry restricts each marginal on at most $t$ sites to the algebra of tensor-factor permutations; permutation symmetry makes it central. Schur--Weyl duality expresses maximal mixing as linear equations in the sector weights. These conditions are sufficient: every feasible weight vector produces pure parent states, with no remaining phase equations to solve. Puncturing gives $q$-dimensional codes of distance at least $t$ and transports the local symmetry to the logical system.

The remainder of the paper develops this mechanism in the Steane setting, proves the general construction theorem, and then applies it to \(\widetilde A_7\). An  example in the appendix, using the binary tetrahedral group, recovers the standard $ [[5,1,3]] $ code \cite{laflamme} and shows how additional time-reversal symmetry can raise the distance beyond the basic $t$-strength guarantee.

\section{The Steane code, the Fake Steane code, and an invariant family}
\label{sec:steane}

Let $H$ be the binary $[7,4,3]$ Hamming code. Its dual $H^\perp$ is the $[7,3,4]$ simplex code, with $H^\perp\subset H$ and every nonzero word of weight four. The $[[7,1,3]]$ Steane CSS code uses this simplex code for both its $X$- and $Z$-type stabilizers, giving transversal Hadamard (since the X and Z type stabilizer come from the same code) and transversal phase gates (since the simplex code is double even), and thus the entire single qubit Clifford group can be implemented transversally~\cite{steane}. Also, the seven physical qubits are indexed by the seven nonzero points of $\FF_2^3$, and thus the corresponding $\GL(3,\FF_2)$ permutation symmetry preserves the Steane code.

Surprisingly, these prescribed transversal Clifford and permutation actions fails to specify the code uniquely; they also occur for a nonstabilizer $((7,2,3))_2$ code, encountered by Eric Kubischta and the author in private correspondence in 2022 and called the Fake Steane code~\cite{kubischta_private_2022}. The comparison is explicit. Let $\mathcal S$ be the seven nonzero words of the simplex code $H^\perp$, and let $\mathcal T$ be the other $28$ weight-four bit strings of length seven. Both codes have $|1_\alpha\rangle=X^{\otimes7}|0_\alpha\rangle$, for $\alpha\in\{\mathrm{St},\mathrm F\}$, while
\[
\begin{aligned}
 |0_{\mathrm{St}}\rangle
 &=\frac1{\sqrt8}\left(|0000000\rangle+\sum_{x\in\mathcal S}|x\rangle\right),\\
 |0_{\mathrm F}\rangle
 &=\frac1{\sqrt{200}}\left(5|0000000\rangle-3\sum_{x\in\mathcal S}|x\rangle+2\sum_{x\in\mathcal T}|x\rangle\right).
\end{aligned}
\]
We write $|a_L\rangle=|a_{\mathrm{St}}\rangle$ for the usual Steane basis. 

Why do these different codes have the same transversal and permutation symmetries? We answer this question by showing that purification places both codes in a continuous family of eight-qubit invariant states, revealing what their common symmetries determine.

\subsection{Purification and the eight-site symmetry}

Let
\[
    W:\CC^2\longrightarrow(\CC^2)^{\otimes7}
\]
be an encoding isometry whose transversal \(2O\) action is contragredient to the defining two-dimensional representation:
\begin{equation}
\label{eq:seven-qubit-covariance}
    g^{\otimes7}W=W\overline g
    \qquad(g\in2O).
\end{equation}
Associate to \(W\) its normalized purification
\begin{equation}
\label{eq:parent-from-encoder}
    |\Omega_W\rangle
    =
    \frac1{\sqrt2}\sum_{a=0}^1W|a\rangle\otimes|a\rangle.
\end{equation}
Since
\[
    (\overline g\otimes g)
    \frac{|00\rangle+|11\rangle}{\sqrt2}
    =
    \frac{|00\rangle+|11\rangle}{\sqrt2},
\]
Equation~\eqref{eq:seven-qubit-covariance} implies
\[
    g^{\otimes8}|\Omega_W\rangle=|\Omega_W\rangle .
\]
Thus a \(2O\)-covariant seven-qubit code becomes a diagonally \(2O\)-invariant eight-qubit pure state.

There is a parallel enlargement of the permutation picture.  Label the
seven physical qubits by the nonzero points of $\FF_2^3$ and the purifying
qubit by $0$.  The original $\GL(3,\FF_2)$ permutation symmetry is then the
stabilizer of $0$ inside
\[
    \AGL(3,\FF_2)=\FF_2^3\rtimes\GL(3,\FF_2).
\]
For the Steane and Fake Steane codes, the purifications acquire the full
$3$-transitive permutation symmetry $\AGL(3,\FF_2)$ on all eight points. For Steane, the purification is the extended-Hamming state
\begin{equation}
\label{eq:hamming-parent}
    |\Omega_0\rangle
    =
    \frac14\sum_{c\in H_8}|c\rangle
    =
    \frac{|0_L\rangle|0\rangle+|1_L\rangle|1\rangle}{\sqrt2},
    \qquad H_8=[8,4,4]_2.
\end{equation}

To compare it directly with the Fake Steane parent, let $\cA$ be the
fourteen affine planes in $\FF_2^3$, let $\cN$ be the other fifty-six
four-element subsets, and write $|S\rangle$ for the binary indicator word
of $S\subseteq\FF_2^3$.  Since the words of $H_8$ are the empty set, the
whole set, and the affine planes,
\begin{equation}
\label{eq:hamming-parent-affine}
    |\Omega_0\rangle
    =
    \frac14
    \left(
        |00000000\rangle+|11111111\rangle
        +\sum_{S\in\cA}|S\rangle
    \right).
\end{equation}

The Fake Steane purification is the other real point
\begin{equation}
\label{eq:fake-steane-parent}
    |\Omega_\pi\rangle
    =
    \frac1{20}
    \left(
        5|00000000\rangle+5|11111111\rangle
        -3\sum_{S\in\cA}|S\rangle
        +2\sum_{S\in\cN}|S\rangle
    \right).
\end{equation}
It is likewise invariant under
$2O\times\AGL(3,\FF_2)$, and puncturing any coordinate gives the
Fake Steane code.

\subsection{The nonuniqueness is continuous}

The two real parents are only two points of a larger family.  For
\(\theta\in\RR/2\pi\ZZ\), define
\begin{equation}
\label{eq:steane-explicit}
\begin{aligned}
    |\Omega_\theta\rangle
    ={}&
    \frac{|00000000\rangle+|11111111\rangle}{4}
    +\frac{1+4e^{i\theta}}{20}
        \sum_{S\in\cA}|S\rangle \\
    &+
    \frac{1-e^{i\theta}}{20}
        \sum_{S\in\cN}|S\rangle.
\end{aligned}
\end{equation}
At \(\theta=0\), the non-plane terms disappear and
\(|\Omega_0 \rangle\) is the extended-Hamming
state~\eqref{eq:hamming-parent}.  At \(\theta=\pi\), it is the Fake Steane
parent~\eqref{eq:fake-steane-parent}.  Generic values of \(\theta\) give
properly complex states.

Every state in~\eqref{eq:steane-explicit} is normalized, invariant under
\(2O\times\AGL(3,\FF_2)\), and of exact pure distance four, as shown below.
Puncturing any coordinate gives a
\(((7,2,3))_2\) code with the same prescribed transversal \(2O\) action and
with the point stabilizer
\[
    \AGL(3,\FF_2)_0\cong\GL(3,\FF_2)
\]
acting by permutation symmetries on the 7 physical qubits.

Although the Steane and Fake Steane codes have the same ordinary
Shor--Laflamme weight enumerators,
\[
 A(z)=1+21z^4+42z^6,
 \qquad
 B(z)=1+21z^3+21z^4+126z^5+42z^6+45z^7,
\]
they are distinguished by a finer support-resolved weight enumerator.
For a code projector $P$ of rank $K$, define
\[
 A_\omega(P)
 =
 \frac1{K^2}
 \sum_{\operatorname{supp}(E)=\omega}
 |\Tr(EP)|^2,
\]
where the sum is over Hermitian Pauli operators with exact support
$\omega$; this is the ``frequency analysis'' refinement of the quantum
weight enumerator~\cite{yu_graphical}.  The quantities $A_\omega$ are
invariant under local unitaries.

Let $\mathcal S$ be the seven weight-four supports of the simplex code
and $\mathcal T$ the other twenty-eight weight-four supports.  Then
\[
 A_\omega^{\rm St}
 =
 \begin{cases}
 3,&\omega\in\mathcal S,\\
 0,&\omega\in\mathcal T,
 \end{cases}
 \qquad
 A_\omega^{\rm F}
 =
 \begin{cases}
 \frac{279}{125},&\omega\in\mathcal S,\\[1mm]
 \frac{24}{125},&\omega\in\mathcal T.
 \end{cases}
\]
All other support coefficients agree.  For a stabilizer code every
$A_\omega$ is an integer, so the fractional coefficients show that the
Fake Steane code is not locally-unitarily equivalent to any stabilizer
code.  The complete fine enumerators are recorded in
Appendix~\ref{app:weight-enumerators}.

\subsection{Two invariant components}

The reason the phase in~\eqref{eq:steane-explicit} is free becomes
transparent after decomposing the simultaneous invariant space.  Write
\[
    |\Omega_\theta\rangle=a+e^{i\theta}b,
\]
where
\[
    a=
    \frac{|00000000\rangle+|11111111\rangle}{4}
    +\frac1{20}\sum_{|S|=4}|S\rangle,
    \qquad
    b=
    \frac15\sum_{S\in\cA}|S\rangle
    -\frac1{20}\sum_{S\in\cN}|S\rangle.
\]
The vector \(a\) is fully symmetric, while \(b\) is an \(\SU(2)\) singlet.
Indeed, every three-element subset of \(\FF_2^3\) lies in one affine plane
and four non-planes, so the collective raising operator annihilates \(b\);
it also has spin weight zero.  Directly,
\[
    \|a\|^2=\frac3{10},
    \qquad
    \|b\|^2=\frac7{10},
    \qquad
    \langle a,b\rangle=0.
\]
Hence
\[
    e_8=\frac{a}{\sqrt{3/10}},
    \qquad
    e_{44}=\frac{b}{\sqrt{7/10}}
\]
span the two simultaneous invariant Schur--Weyl sectors
\([8]\) and \([4,4]\), and
\begin{equation}
\label{eq:steane-sectors}
    |\Omega_\theta\rangle
    =
    \sqrt{\frac3{10}}\,e_8
    +
    e^{i\theta}\sqrt{\frac7{10}}\,e_{44}.
\end{equation}
Both vectors are fixed by the permutation group \(\AGL(3,\FF_2)\).  Invariance
of \(a\) under the local Clifford group follows directly from
\(H^{\otimes8}\) and \(S^{\otimes8}\), or equivalently from the binary
invariant polynomial
\[
    x^8+14x^4y^4+y^8.
\]

The permutation moments explain why error correction fixes the component
norms but not their relative phase.  A normalized transposition class sum
acts on the sectors \([8]\) and \([4,4]\) by
\[
    1,\qquad \frac27,
\]
while a normalized three-cycle class sum acts by
\[
    1,\qquad -\frac1{14}.
\]
For maximally mixed two- and three-qubit marginals the corresponding target
moments are \(1/2\) and \(1/4\).  Therefore the sector weights satisfy
\[
    p_8+p_{44}=1,
    \qquad
    p_8+\frac27p_{44}=\frac12,
    \qquad
    p_8-\frac1{14}p_{44}=\frac14.
\]
The first two equations uniquely give
\[
    (p_8,p_{44})
    =
    \left(\frac3{10},\frac7{10}\right),
\]
and the third then holds automatically.  None of these equations sees the
relative phase.

The distance is exactly four throughout the circle.  A normalized
double-transposition class sum has expectation
\[
    \frac3{10}+\frac17\frac7{10}
    =
    \frac25,
\]
whereas a maximally mixed four-qubit marginal would require \(1/4\).
Thus no point of the circle has pure distance five.  By the puncturing
result proved in Section~\ref{sec:central-moment-method}, every one-site
puncture has exact distance three.

\section{When symmetry makes quantum code construction linear}
\label{sec:central-moment-method}

We now turn the phase freedom in Section~\ref{sec:steane} into a general construction. Local unitary-design symmetry and permutation transitivity reduce the maximal-mixing conditions to a linear feasibility problem in Schur--Weyl sector weights.

\subsection{Pure distance and permutation moments}

Let \(V=\CC^q\).  For a normalized state
\(\psi\in V^{\otimes n} \), write
\[
    \rho_S=\Tr_{S^c}|\psi\rangle\langle\psi|
\]
for the marginal on a set of sites \(S\).

We say that \(\psi\) has pure distance at least \(d\) when
\[
    \langle\psi|E|\psi\rangle=0
\]
for every traceless error \(E\) supported on fewer than \(d\) sites.  Equivalently,
\begin{equation}
\label{eq:mixing}
    \rho_S=\frac{I}{q^{|S|}}
    \qquad (|S|<d).
\end{equation}
Indeed, for an operator \(A\) supported on \(S\),
\[
    \langle\psi|A|\psi\rangle=\Tr(\rho_S A),
\]
so vanishing against every traceless \(A\) is equivalent to maximal mixing of
\(\rho_S\).  We write \(((n,1,d))_q\) for a one-dimensional state of exact pure distance \(d\).

The maximal-mixing conditions imply particularly simple expectations for tensor-factor permutations.  If
\(\sigma\in S_k\) has \(c(\sigma)\) cycles and is embedded on a \(k\)-set \(S\), then
\[
    \Tr_{V^{\otimes k}}(\sigma)=q^{c(\sigma)},
\]
and hence
\begin{equation}
\label{eq:local-permutation-moment}
    \langle\psi|\sigma_S|\psi\rangle
    =
    q^{c(\sigma)-k}
\end{equation}
whenever \(|S|=k<d\).

Let \(\mu\vdash k\) be a cycle type and let
\(C_{\mu,1^{n-k}}\subset S_n\) denote the corresponding conjugacy class with
\(n-k\) additional fixed points.  Define the normalized class sum
\[
    \Omega_{\mu,1^{n-k}}
    =
    \frac1{|C_{\mu,1^{n-k}}|}
    \sum_{\pi\in C_{\mu,1^{n-k}}}\pi .
\]
Averaging~\eqref{eq:local-permutation-moment} over the support gives the
central moment equations
\begin{equation}
\label{eq:central-moment}
    \langle\psi|\Omega_{\mu,1^{n-k}}|\psi\rangle
    =
    q^{c(\mu)-k}.
\end{equation}
These conditions are always necessary for pure distance \(>k\).  Without
additional symmetry they need not be sufficient.

\subsection{Schur--Weyl linearization}

Schur--Weyl duality gives
\[
    V^{\otimes n}
    \cong
    \bigoplus_{\lambda\vdash n,\ \ell(\lambda)\leq q}
    M_\lambda\otimes[\lambda],
\]
where \(M_\lambda\) is the irreducible \(\SU(q)\) representation of highest weight
\(\lambda\), and \([\lambda]\) is the Specht module
of \(S_n\).

Let a finite local group \(\G\leq\SU(q)\) act diagonally and let a permutation
group \(\Sigma\leq S_n\) act by permuting tensor factors.  Their simultaneous
invariant space decomposes as
\[
    (V^{\otimes n})^{\G\times\Sigma}
    \cong
    \bigoplus_\lambda W_\lambda,
    \qquad
    W_\lambda
    :=
    M_\lambda^\G\otimes[\lambda]^\Sigma.
\]

For a normalized invariant state \(\psi\), let \(P_\lambda\) denote the projector onto the full Schur--Weyl sector \(M_\lambda\otimes[\lambda]\), and write
\[
    p_\lambda=\|P_\lambda\psi\|^2.
\]
Then
\[
    p_\lambda\geq0,
    \qquad
    \sum_\lambda p_\lambda=1,
    \qquad
    p_\lambda=0\ \text{if }W_\lambda=0.
\]

For any conjugacy class $C\subseteq S_n$, write
\[
    \Omega_C=\frac1{|C|}\sum_{\pi\in C}\pi
\]
for its normalized class sum.  Since $\Omega_C$ is central in
$\CC[S_n]$, it acts on $[\lambda]$ by the character ratio
\[
    \omega_\lambda(C)
    =
    \frac{\chi^\lambda(C)}{\dim[\lambda]}.
\]
Hence
\begin{equation}
\label{eq:linear-moment}
    \langle\psi|\Omega_C|\psi\rangle
    =
    \sum_\lambda
    p_\lambda
    \frac{\chi^\lambda(C)}{\dim[\lambda]}.
\end{equation}

The central moment conditions therefore become the linear feasibility problem
\begin{equation}
\label{eq:lp}
\begin{gathered}
    p_\lambda\geq0,\qquad
    \sum_\lambda p_\lambda=1,\qquad
    p_\lambda=0\ \text{if }W_\lambda=0,\\[2mm]
    \sum_\lambda
    p_\lambda
    \frac{\chi^\lambda(\mu,1^{n-k})}{\dim[\lambda]}
    =
    q^{c(\mu)-k}
    \qquad
    (1\leq k\leq t,\ \mu\vdash k).
\end{gathered}
\end{equation}

Thus the original quadratic conditions on amplitudes have disappeared from
the central part of the problem.  The unknowns in~\eqref{eq:lp} are only the
sector probabilities \(p_\lambda\).

\subsection{Completeness under local design and permutation symmetry}

\begin{definition}
A finite group \(\G\leq\SU(q)\) is a unitary \(t\)-group if
\[
    \End_\G(V^{\otimes r})
    =
    \End_{\SU(q)}(V^{\otimes r})
    \qquad
    (1\leq r\leq t).
\]
Equivalently, if it is a unitary \(t\)-design.
\end{definition}

The next theorem is the central construction principle.

\begin{theorem}[Linear construction of invariant parents]
\label{thm:central-moments-complete}
Let \(\G\leq\SU(q)\) be a unitary \(t\)-group and let
\(\Sigma\leq S_n\) be a \(t\)-transitive permutation group.  There exists a
normalized \(\G\times\Sigma\)-invariant state whose marginals on all sets of at
most \(t\) sites are maximally mixed if and only if~\eqref{eq:lp} is feasible.

More precisely, for every feasible collection \((p_\lambda)\) and every choice
of unit vectors
\[
    u_\lambda\in W_\lambda
    \qquad (p_\lambda>0),
\]
the state
\begin{equation}
\label{eq:constructed-parent}
    \psi
    =
    \sum_{\lambda:p_\lambda>0}
    \sqrt{p_\lambda}\,u_\lambda
\end{equation}
has pure distance at least \(t+1\).
\end{theorem}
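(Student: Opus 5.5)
Necessity is immediate. If \(\psi\) is a normalized \(\G\times\Sigma\)-invariant state with maximally mixed marginals on all sets of at most \(t\) sites, then \eqref{eq:local-permutation-moment} and its average \eqref{eq:central-moment} hold for every \(k\leq t\). By \eqref{eq:linear-moment}, the weights \(p_\lambda=\|P_\lambda\psi\|^2\) then satisfy \eqref{eq:lp}. The support condition also holds. Indeed, \(P_\lambda\) commutes with \(\G\) and with \(\Sigma\), so \(P_\lambda\psi\) is invariant. It therefore lies in the invariant part of \(M_\lambda\otimes[\lambda]\), which is \(W_\lambda\); hence \(p_\lambda=0\) whenever \(W_\lambda=0\).

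For sufficiency, take a feasible \((p_\lambda)\) and build \(\psi\) as in \eqref{eq:constructed-parent}. The summands lie in orthogonal sectors, so \(\psi\) is normalized, invariant, and satisfies \(\|P_\lambda\psi\|^2=p_\lambda\). By \eqref{eq:linear-moment}, \(\psi\) then satisfies every central moment equation \eqref{eq:central-moment} with \(k\leq t\). The work is to upgrade these averaged scalar identities into the operator statement \(\rho_S=I/q^{|S|}\) for \(|S|\leq t\). I would argue in three steps.

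\emph{Step 1 (design symmetry).} Since \(\psi\) is \(\G\)-invariant, each \(\rho_S\) with \(|S|=k\leq t\) commutes with \(g^{\otimes k}\) for every \(g\in\G\). The unitary \(t\)-group hypothesis gives \(\End_\G(V^{\otimes k})=\End_{\SU(q)}(V^{\otimes k})\). By Schur--Weyl duality this commutant is the image of \(\CC[S_k]\), so \(\rho_S\) is a combination of tensor-factor permutations on \(S\).

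\emph{Step 2 (reduction to moments).} Write \(\rho_S=\sum_\lambda \frac{m_\lambda}{\dim M_\lambda}\,\Pi_\lambda\otimes X_\lambda\), with \(X_\lambda\) a positive operator on \([\lambda]\) and \(\Pi_\lambda\) the identity on \(M_\lambda\). The target \(I/q^k\) corresponds to \(X_\lambda\propto I\) with weights proportional to \(\dim M_\lambda\dim[\lambda]\). The permutation moments \(\langle\psi|\sigma_S|\psi\rangle=\Tr(\rho_S\sigma)\) for \(\sigma\in S_k\) determine the image of \(\rho_S\) in \(\CC[S_k]/\ker\), since the trace pairing on that algebra is nondegenerate. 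So it suffices to show \(\Tr(\rho_S\sigma)=q^{c(\sigma)-k}\) for every individual \(\sigma\in S_k\), not just on average.

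\emph{Step 3 (from averaged to individual moments).} This is the step I expect to be the main obstacle. The expectation \(\langle\psi|\pi|\psi\rangle\) for \(\pi\in S_n\) is constant on orbits of \(\Sigma\) acting on \(S_n\) by conjugation, because \(\psi\) is \(\Sigma\)-invariant. Since \(\Sigma\) is \(t\)-transitive, it is transitive on ordered \(k\)-tuples for \(k\leq t\). Any two permutations of \(S_n\) of the same cycle type \(\mu\vdash k\) with \(n-k\) fixed points are therefore conjugate under \(\Sigma\): map the ordered support of one onto that of the other, with the cycles aligned. Hence \(\langle\psi|\pi|\psi\rangle\) depends only on the cycle type and equals the class average \(\langle\psi|\Omega_{\mu,1^{n-k}}|\psi\rangle=q^{c(\mu)-k}\). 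This yields \(\Tr(\rho_S\sigma)=q^{c(\sigma)-k}=\Tr(\sigma)/q^k\) for all \(\sigma\in S_k\). By Steps 1 and 2, \(\rho_S\) and \(I/q^k\) both lie in the span of the permutations and have the same trace pairing with every permutation, so they are equal.

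Finally, maximal mixing of all \(\rho_S\) with \(|S|\leq t\) is exactly pure distance at least \(t+1\) by \eqref{eq:mixing}. This completes both directions and the "more precisely" clause.
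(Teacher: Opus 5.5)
Your proof is correct, and it differs from the paper's only in how the final step is closed.

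The paper uses $r$-transitivity twice. First, the setwise stabilizer of $S$ induces all of $S_r$, so $\rho_S$ lies in the center $Z(\operatorname{im}\CC[S_r])$ and can be written as $\sum_\nu a_{\nu,S}\Pi_{\nu,S}$. Second, transitivity identifies the local class-sum moments $\Tr(\rho_S\Omega_\mu)$ with the global averages in \eqref{eq:lp}. The paper then concludes because class sums separate the central idempotents, i.e.\ the characters $\chi^\nu$ are linearly independent.

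You use transitivity only once, in the sharper form that any two permutations of type $(\mu,1^{n-k})$ with $k\le t$ are $\Sigma$-conjugate. This makes every individual moment $\Tr(\rho_S\sigma)$ equal to $q^{c(\sigma)-k}$. Positive-definiteness of $(X,Y)\mapsto\Tr(X^\dagger Y)$ on the $*$-algebra $\operatorname{im}\CC[S_k]$ then gives $\rho_S=I/q^k$ directly. In your argument, centrality is a consequence rather than an input.

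Your version is slightly more economical and needs no character theory in the sufficiency direction. The paper's version foregrounds the structural picture: design symmetry fixes the algebra, transitivity makes the marginal central, and a central element is determined by its class-sum moments. That picture explains why only class-sum equations appear in the linear program.

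You also supply a justification the paper leaves implicit in the necessity direction. Since $P_\lambda\psi$ is itself $\G\times\Sigma$-invariant, it lies in $W_\lambda$, so $p_\lambda=0$ whenever $W_\lambda=0$.
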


\begin{proof}
Necessity follows from~\eqref{eq:central-moment} and
\eqref{eq:linear-moment}.  Conversely, let \((p_\lambda)\) be feasible and
construct \(\psi\) as in~\eqref{eq:constructed-parent}.  Fix a set
\(S\) of \(r\leq t\) sites. Diagonal \(\G\)-invariance implies that \(\rho_S\) commutes with \(\G\) on
\(V^{\otimes r}\).  Since \(\G\) is a unitary \(t\)-group,
\[
    \rho_S
    \in
    \End_\G(V^{\otimes r})
    =
    \operatorname{im}\CC[S_r].
\]
Because \(\Sigma\) is \(r\)-transitive, the setwise stabilizer of \(S\)
induces the full symmetric group \(S_r\) on \(S\).  Permutation invariance
therefore implies
\[
    \rho_S
    \in
    Z\!\left(\operatorname{im}\CC[S_r]\right).
\]

Write
\[
    \rho_S=\sum_\nu a_{\nu,S}\Pi_{\nu,S}
\]
using the local Schur--Weyl central idempotents.  For every cycle type
\(\mu\vdash r\), all embeddings of a permutation of type \(\mu\) on \(S\)
have the same expectation by \(r\)-transitivity.  Thus the global moment in
\eqref{eq:lp} equals the local class-sum moment:
\[
    \Tr(\rho_S\Omega_\mu)
    =
    q^{c(\mu)-r}
    =
    \Tr\!\left(\frac{I}{q^r}\Omega_\mu\right).
\]
The class sums span the center of the image of \(\CC[S_r]\), equivalently
their character ratios separate its central idempotents.  Hence all
coefficients \(a_{\nu,S}\) are forced to equal those of \(I/q^r\), and
\[
    \rho_S=\frac{I}{q^r}.
\]
This holds for every \(r\leq t\).
\end{proof}

The proof can be summarized in three steps:
\[
\boxed{
\begin{aligned}
\G\text{ a unitary }t\text{-group}
&\Longrightarrow
\rho_S\in\operatorname{im}\CC[S_r],\\
\Sigma\text{ }t\text{-transitive}
&\Longrightarrow
\rho_S\in Z\!\left(\operatorname{im}\CC[S_r]\right),\\
\text{central moments feasible}
&\Longrightarrow
\rho_S=\frac{I}{q^r}.
\end{aligned}}
\]
The local design property determines the algebra in which the marginal can live, while permutation transitivity makes it central.

For a fixed feasible solution, the certified family of parents is
\[
    \left(
    \prod_{\lambda:p_\lambda>0}
    \mathbb S(W_\lambda)
    \right)\big/U(1),
\]
where \(\mathbb S(W)\) denotes the unit sphere in \(W\).  In particular, if
\(s\) contributing invariant sectors are one-dimensional, then the remaining
freedom is a torus \(T^{s-1}\) of relative phases.  This is exactly what
happens for the Steane circle and the \(\tA\) two-torus.

\subsection{Purification and puncturing}

The seven-qubit discussion in Section~\ref{sec:steane} can be read in either
direction.  A covariant code can be purified to an invariant parent, while an
invariant parent with a maximally mixed one-site marginal can be punctured to
a covariant code.

\begin{proposition}[Purification of a covariant code]
\label{prop:purification}
Let
\[
    W:V\longrightarrow V^{\otimes(n-1)}
\]
be an isometry satisfying
\[
    g^{\otimes(n-1)}W=W\overline g
    \qquad(g\in\G).
\]
Then
\[
    |\Omega_W\rangle
    =
    \frac1{\sqrt q}
    \sum_{a=0}^{q-1}W|a\rangle\otimes|a\rangle
\]
is invariant under the diagonal action \(g^{\otimes n}\).
\end{proposition}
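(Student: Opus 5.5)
The plan is a direct computation that rewrites the purification through the maximally entangled vector. Set
\[
    |\Phi\rangle=\frac1{\sqrt q}\sum_{a=0}^{q-1}|a\rangle\otimes|a\rangle\in V\otimes V,
\]
so that
\[
    |\Omega_W\rangle=(W\otimes I)|\Phi\rangle .
\]
I will split the diagonal action as \(g^{\otimes n}=g^{\otimes(n-1)}\otimes g\). The first \(n-1\) factors act on the image of \(W\), and the last factor acts on the purifying site.

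The first step moves the first \(n-1\) factors through the encoder. The covariance hypothesis gives
\[
    g^{\otimes n}|\Omega_W\rangle=(g^{\otimes(n-1)}W\otimes g)|\Phi\rangle=(W\otimes I)(\overline g\otimes g)|\Phi\rangle .
\]
This reduces the claim to the identity \((\overline g\otimes g)|\Phi\rangle=|\Phi\rangle\) for \(g\) unitary, which generalizes the qubit computation displayed in Section~\ref{sec:steane}.

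The second step proves that identity with the transpose trick \((A\otimes I)|\Phi\rangle=(I\otimes A^{T})|\Phi\rangle\). To verify it, expand both sides in the computational basis: both have coefficient \(A_{ba}/\sqrt q\) on \(|b\rangle|a\rangle\). Applying the trick with \(A=\overline g\), and noting \(\overline g^{\,T}=g^\dagger\), gives
\[
    (\overline g\otimes g)|\Phi\rangle=(I\otimes g\,g^\dagger)|\Phi\rangle=|\Phi\rangle
\]
by unitarity. Since \(\G\leq\SU(q)\), every \(g\) is unitary, so this applies to all of \(\G\).

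I expect no real obstacle. The only point needing care is the convention that \(\overline g\) denotes the entrywise complex conjugate in the computational basis used to define \(|\Phi\rangle\). Under that convention \(\overline g=(g^{\dagger})^{T}\), which is exactly what the transpose trick requires. Normalization of \(|\Omega_W\rangle\) is not part of the claim, but it follows because \(W\) is an isometry.
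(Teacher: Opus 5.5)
Your proposal is correct and follows the paper's proof: write $|\Omega_W\rangle=(W\otimes I)|\Phi_q\rangle$, use the covariance relation to turn $g^{\otimes(n-1)}W$ into $W\overline g$, and conclude with $(\overline g\otimes g)|\Phi_q\rangle=|\Phi_q\rangle$. The only difference is that you justify this last identity via the transpose trick and unitarity, whereas the paper simply asserts it.
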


\begin{proof}
The maximally entangled vector
\[
    |\Phi_q\rangle
    =
    \frac1{\sqrt q}\sum_a|a\rangle|a\rangle
\]
satisfies
\[
    (\overline g\otimes g)|\Phi_q\rangle=|\Phi_q\rangle.
\]
Since
\[
    |\Omega_W\rangle=(W\otimes I)|\Phi_q\rangle,
\]
the covariance relation gives
\[
    g^{\otimes n}|\Omega_W\rangle=|\Omega_W\rangle.
\]
\end{proof}

\begin{corollary}[Codes from invariant parents]
\label{cor:puncture}
Let \(\psi\in V^{\otimes n}\) be a diagonally \(\G\)-invariant pure state
whose marginals on all sets of at most \(t\) sites are maximally mixed.
Puncturing any one site produces a \(q\)-dimensional code
\(\cC\subset V^{\otimes(n-1)}\) with distance at least \(t\).  The encoding
may be chosen so that
\begin{equation}
\label{eq:covariance}
    g^{\otimes(n-1)}\enc
    =
    \enc\,\overline g
    \qquad(g\in\G).
\end{equation}
\end{corollary}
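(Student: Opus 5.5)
We propose to read the encoding off from the punctured site. Suppose we puncture site $n$, so $\psi\in V^{\otimes(n-1)}\otimes V$. Since $t\geq1$, the one-site marginal $\rho_{\{n\}}$ equals $I/q$. The Schmidt decomposition across the cut $[n-1]\,|\,\{n\}$ then has $q$ equal coefficients, so we can write
\[
\psi=\frac1{\sqrt q}\sum_{a=0}^{q-1}\enc|a\rangle\otimes|a\rangle,
\qquad
\enc|a\rangle:=\sqrt q\,(I\otimes\langle a|)\psi .
\]
Maximal mixing of $\rho_{\{n\}}$ is exactly the statement that these vectors are orthonormal, so $\enc$ is an isometry. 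Its image $\cC$ is $q$-dimensional, and the decomposition is precisely the purification form of Proposition~\ref{prop:purification}, now read in reverse.

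For covariance, we start from $g^{\otimes n}\psi=\psi$ and apply $(I\otimes\langle a|)$:
\[
g^{\otimes(n-1)}\enc|a\rangle=\sqrt q\,(I\otimes\langle a|)(I\otimes g^{-1})\psi .
\]
Writing $\langle a|g^{-1}=\sum_b (g^{-1})_{ab}\langle b|$ and using $(g^{-1})_{ab}=\overline{g_{ba}}$ for unitary $g$, this becomes
\[
g^{\otimes(n-1)}\enc|a\rangle=\sum_b \overline g_{ba}\,\enc|b\rangle=\enc\,\overline g|a\rangle .
\]
That is \eqref{eq:covariance}. Equivalently, $\psi=(\enc\otimes I)|\Phi_q\rangle$, and we transfer $g$ across $|\Phi_q\rangle$ using $(\overline g\otimes g)|\Phi_q\rangle=|\Phi_q\rangle$.

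For the distance, we verify the Knill--Laflamme conditions $P\,E\,P=c(E)P$ for $E$ supported on a set $S\subseteq[n-1]$ with $|S|<t$, where $P=\enc\enc^\dagger$. Matrix elements give
\[
\langle a|\enc^\dagger E\enc|b\rangle
= q\,\langle\psi|\,E\otimes|a\rangle\langle b|\,|\psi\rangle ,
\]
which is $q$ times the expectation in $\psi$ of an operator supported on $S\cup\{n\}$. This set has at most $t$ sites, so its marginal is $I/q^{|S|+1}$ by hypothesis. The expectation therefore factorizes as $q^{-|S|}\Tr(E)\cdot q^{-1}\delta_{ab}$, hence
\[
\enc^\dagger E\enc=\frac{\Tr E}{q^{|S|}}\,I .
\]
This holds for all errors of weight at most $t-1$, so the code has distance at least $t$. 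The distance claim is the step needing care: the hypothesis must cover sets of size $|S|+1\leq t$ that include the punctured site, which is exactly why the bound drops by one from the parent's pure distance $t+1$. The remaining ingredients, the Schmidt decomposition and the index bookkeeping for $\overline g$ versus $g^{T}$, are routine.
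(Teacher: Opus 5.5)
Your proposal is correct and follows the paper's own proof. The paper uses the same steps: the Schmidt decomposition at the deleted site, orthonormality from the maximally mixed one-site marginal, the Knill--Laflamme computation $\langle a|\enc^\dagger E\enc|b\rangle=q\langle\psi|E\otimes|a\rangle\langle b|\,|\psi\rangle=q^{-|S|}\Tr(E)\,\delta_{ab}$ via maximal mixing on $S\cup\{r\}$, and covariance derived from diagonal $\G$-invariance. Your bookkeeping $(g^{-1})_{ab}=\overline{g_{ba}}$ only spells out the one line the paper states without detail.
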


\begin{proof}
For the deleted site \(r\), write the Schmidt decomposition
\[
    |\psi\rangle
    =
    \frac1{\sqrt q}
    \sum_{a=0}^{q-1}|\phi_a\rangle|a\rangle,
    \qquad
    |\phi_a\rangle
    =
    \sqrt q\,
    (I\otimes\langle a|)|\psi\rangle.
\]
Because the one-site marginal is \(I/q\), the \(\phi_a\) are orthonormal.

Let \(E\) be supported on a set \(S\) of fewer than \(t\) remaining sites.
Then
\[
\begin{aligned}
    \langle\phi_a|E|\phi_b\rangle
    &=
    q\,
    \langle\psi|
    E\otimes|a\rangle\langle b|_r
    |\psi\rangle\\
    &=
    \frac{\Tr(E)}{q^{|S|}}\delta_{ab},
\end{aligned}
\]
because the marginal on \(S\cup\{r\}\) is maximally mixed.  These are the
Knill--Laflamme conditions, so the punctured code has distance at least
\(t\).

Finally, diagonal \(\G\)-invariance implies
\[
    g^{\otimes(n-1)}|\phi_a\rangle
    =
    \sum_b\overline{g_{ba}}|\phi_b\rangle,
\]
which is~\eqref{eq:covariance}.
\end{proof}

For a unitary \(t\)-group together with a \(t\)-transitive permutation group,
a feasible solution gives a punctured code of distance at least $ t $.  Additional
symmetry can sometimes raise the distance beyond the basic $ t $ strength guarantee;
Appendix~\ref{app:tetrahedral} gives a unitary-\(2\)-group example in which
time-reversal symmetry supplies the missing condition and forces some of the punctured codes to have distance $ 3 $ rather than distance $ 2 $.

\section{Exceptional transversal gates: the \texorpdfstring{$\tA$}{A7} family}\label{sec:a7-main}
Now replace the local binary octahedral symmetry $ 2O \leq \SU(2) $ by $\tA\leq\SU(4)$ and keep the same permutation symmetry group $\AGL(3,\FF_2)$ on the eight points of $\FF_2^3$. This four-dimensional representation is a unitary $3$-group~\cite{cliffordsubgroups,bannai}; Appendix~\ref{app:certificates} also gives a finite character-moment certificate. Thus the construction theorem applies with $q=4$, $n=8$, and $t=3$.

\begin{table}[tb]
\centering
\renewcommand{\arraystretch}{1.12}
\caption{}\label{tab:steane-a7}
\begin{tabular}{@{}lll@{}}\toprule
&Steane family&$\tA$ family\\\midrule
Local dimension&$2$&$4$\\
Local symmetry&$2O$&$\tA$\\
Permutation symmetry&$\AGL(3,\FF_2)$&$\AGL(3,\FF_2)$\\
Invariant sectors&$[8],[4,4]$&$[8],[4,4],[2,2,2,2]$\\
Parent parameters&$S^1$&$T^2$\\
Pure parent distance&$4$&$4$\\
Punctured code&$((7,2,3))_2$&$((7,4,3))_4$\\\bottomrule
\end{tabular}
\end{table}

\subsection{Three sectors and three equations}
For a Schur sector $M_\lambda\otimes[\lambda]$, the multiplicity of simultaneous invariants is
\[
 \dim M_\lambda^{\tA}\,\dim[\lambda]^{\AGL(3,\FF_2)}.
\]
The character averages in Appendix~\ref{app:certificates} leave precisely three one-dimensional invariant sectors,
\[
 [2,2,2,2],\qquad[4,4],\qquad[8].
\]
The first is an $\SU(4)$ singlet; the other two are finite-group invariant sectors. Consequently
\[
 \bigl((\CC^4)^{\otimes8}\bigr)^{\tA\times\AGL(3,\FF_2)}
 =\CC e_{2222}\oplus\CC e_{44}\oplus\CC e_8,
\]
with normalized vectors given in Appendix~\ref{app:a7-explicit}. The relevant character ratios are
\[
\begin{array}{c|cc}
\lambda&\omega_\lambda(2,1^6)&\omega_\lambda(3,1^5)\\\hline
{[2,2,2,2]}&-2/7&-1/14\\
{[4,4]}&2/7&-1/14\\
{[8]}&1&1
\end{array}
\]
and the maximal-mixing targets are $1/4$ and $1/16$. Writing $p_{2222},p_{44},p_8$ for the sector weights, we obtain
\begin{equation}\label{eq:a7-linear}
\begin{aligned}
 p_{2222}+p_{44}+p_8&=1,\\
 -\frac27p_{2222}+\frac27p_{44}+p_8&=\frac14,\\
 -\frac1{14}(p_{2222}+p_{44})+p_8&=\frac1{16}.
\end{aligned}
\end{equation}
The unique solution is
\[
 (p_{2222},p_{44},p_8)=\left(\frac7{32},\frac{21}{32},\frac18\right).
\]
Only the norms have been fixed. Every choice of the two relative phases therefore gives a parent state
\begin{equation}\label{eq:a7-torus}
 |\Psi_{\theta,\varphi}\rangle
 =\sqrt{\frac7{32}}e_{2222}
 +e^{i\theta}\sqrt{\frac{21}{32}}e_{44}
 +e^{i\varphi}\sqrt{\frac18}e_8
\end{equation}
of pure distance at least four. For a double transposition the character ratios are $1/7,1/7,1$, so
\[
 \langle\Omega_{(2,2)}\rangle
 =\frac17\left(\frac7{32}+\frac{21}{32}\right)+\frac18=\frac14.
\]
Pure distance at least five would require $4^{2-4}=1/16$. Every point of the two-torus thus has exact pure distance four.

\subsection{A single-error-correcting code}
Puncturing any site gives a four-dimensional codespace with logical basis
\[
 |a_L\rangle=2\bigl(I^{\otimes7}\otimes\langle a|\bigr)|\Psi_{\theta,\varphi}\rangle,
 \qquad a=0,1,2,3.
\]
Corollary~\ref{cor:puncture} gives exact parameters $((7,4,3))_4$ and
\[
g^{\otimes7}\enc_{\theta,\varphi}=\enc_{\theta,\varphi}\overline g
 \qquad(g\in\tA).
\]
The logical action is the complex conjugate of the defining physical action. Note that $\tA$ contains elements of order seven. These cannot be two-qubit Clifford gates: the two-qubit Clifford group has projective order $11520$, which is not divisible by seven~\cite{cliffordsubgroups}. This gives a concrete witness to the non-Clifford transversal action and completes the proof of Theorem~\ref{thm:main}.

\section*{Acknowledgments}
The author would like to thank Eric Kubischta for many years of collaboration on the topic of quantum error-correcting codes with exotic transversal gates, and in particular for his insight on the relationship between invariant states and covariant codes, the Fake Steane code, and sundry other topics. Computations for this project were completed using OpenAI ChatGPT and the group theory software GAP (Groups Algorithms and Programming).

\appendices
\section{Characters and invariant-space dimensions}\label{app:certificates}
\subsection{Symmetric-group coefficients}
The normalized class sum acts on $[\lambda]$ by $\omega_\lambda(\mu)=\chi^\lambda(\mu)/\dim[\lambda]$. For a transposition, the content formula gives
\[
 \omega_\lambda(2,1^{n-2})
 =\binom n2^{-1}\sum_{(r,s)\in\lambda}(s-r).
\]
The hook-length formula gives $\dim[\lambda]$, and the Murnaghan--Nakayama rule gives the remaining character values~\cite{sagan}. For $[4,4]$, the content sum is $8$ and the dimension is $14$; the three-cycle character is $-1$. Its conjugate partition $[2,2,2,2]$ has the opposite transposition ratio and the same ratios on even permutations. This gives the tables in the main text, including the double-transposition ratios $1/7,1/7,1$.

\subsection{The permutation symmetry group}
The affine maps $x\mapsto Ax+b$, $A\in\GL(3,\FF_2)$, $b\in\FF_2^3$, give the $1344$ elements of $\Sigma=\AGL(3,\FF_2)$. Their cycle counts are
\[
\begin{array}{c|rrrrrrrr}
\text{type}&1^8&2^2 1^4&2^4&3^2 1^2&4^1 2^1 1^2&4^2&6^1 2^1&7^1 1^1\\\hline
\text{number}&1&42&49&224&168&252&224&384.
\end{array}
\]
Thus $m_\Sigma(\lambda)=1344^{-1}\sum_{\sigma\in\Sigma}\chi^\lambda(\sigma)$ is directly evaluable. For the local group, write $s_\lambda(g)$ for the Schur character of $M_\lambda$ and use
\begin{equation}\label{eq:invariant-averages}
 m_\G(\lambda)=\frac1{|\G|}\sum_{g\in\G}s_\lambda(g),
 \qquad \dim W_\lambda=m_\G(\lambda)m_\Sigma(\lambda).
\end{equation}
For all partitions of eight with at most four rows these evaluations give
\[
\begin{array}{c|cc}
\lambda&m_\Sigma(\lambda)&m_{\tA}(\lambda)\\\hline
{[8]}&1&1\\
{[4,4]}&1&1\\
{[2,2,2,2]}&1&1\\
\text{all other such partitions}&0&0.
\end{array}
\]
For $q=2$ only partitions with at most two rows occur; the local $2O$ averages on $[8]$ and $[4,4]$ are both one. This proves the invariant-space dimensions used for Steane and $A_7$.

\section{Explicit data for the \texorpdfstring{$\tA$}{A7} invariant space}\label{app:a7-explicit}
\subsection{The chosen lift}
Put $s=(1+\sqrt{-7})/2$, $\bar s=(1-\sqrt{-7})/2$, and $\zeta_7=e^{2\pi i/7}$. We use $\tA=\langle V_1,V_2,D\rangle\leq\SU(4)$, where
\[
 V_1=\begin{pmatrix}
 1&0&0&0\\0&0&0&\zeta_7^{-1}\\0&\zeta_7^5&0&0\\0&0&\zeta_7^3&0
 \end{pmatrix},\quad
 V_2=\frac1{\sqrt7}\begin{pmatrix}
 \bar s^2&1&1&1\\1&s&\bar s&\bar s\\1&\bar s&s&\bar s\\1&\bar s&\bar s&s
 \end{pmatrix},\quad
 D=\begin{pmatrix}
 1&0&0&0\\0&0&0&1\\0&1&0&0\\0&0&1&0
 \end{pmatrix}.
\]
The matrices $V_1,-iV_2,D$ generate the four-dimensional representation of the double cover of $A_7$ in Ref.~\cite[Appendix~B.2]{cliffordsubgroups}. Since $V_2^3=-iI$, our group is its scalar extension by $iI$ and has order $10080$ and scalar center $\langle iI\rangle$. Its projective image is $A_7$. On eight tensor factors these scalar factors act trivially, so the two choices have the same invariant parents and operator commutants. 

\subsection{Orbit notation}

Identify the eight ququart sites with \(\FF_2^3\), ordered as
\[
    000,\ 100,\ 010,\ 110,\ 001,\ 101,\ 011,\ 111.
\]
For an \(\AGL(3,\FF_2)\)-orbit \(\mathcal O\) of computational-basis words
\(w\in\{0,1,2,3\}^8\), write
\[
    |\mathcal O\rangle
    =
    |\mathcal O|^{-1/2}
    \sum_{w\in\mathcal O}|w\rangle.
\]
Orbit labels are written as representative words in the site order above.

\subsection{The three invariant vectors}

The following normalized vectors span the three invariant sectors certified in Appendix~\ref{app:certificates}.

The \(\SU(4)\)-singlet vector is
\[
\begin{aligned}
e_{2222}
={}&
\frac{1}{\sqrt3}|\mathcal O_{33221100}\rangle
-\frac{1}{\sqrt6}|\mathcal O_{32321100}\rangle
-\frac{1}{\sqrt6}|\mathcal O_{33212100}\rangle \\
&+
\frac{1}{\sqrt6}|\mathcal O_{32312100}\rangle
-\frac{1}{\sqrt6}|\mathcal O_{22313100}\rangle .
\end{aligned}
\]

The \([4,4]\)-sector vector is
\[
    e_{44}
    =
    \sum_{\mathcal O}c_{\mathcal O}|\mathcal O\rangle,
\]
with nonzero coefficients listed after multiplication by $210$:
\[
\begin{array}{@{}crl@{\qquad}crl@{}}
\text{orbit} & |\mathcal O| & 210c_{\mathcal O} & \text{orbit} & |\mathcal O| & 210c_{\mathcal O}\\
\hline
31110000  &  56    & 8 \sqrt{21} &  32312100  &  1344  & 9 \sqrt{14}\\[1pt]
22210000  &  56    & 8 \sqrt{21} &  22313100  &  336   & - 3 \sqrt{14}\\[1pt]
33320000  &  56    & 8 \sqrt{21} &  33211110  &  672   & - 12 \sqrt{7}\\[1pt]
31101000  &  224   & - 4 \sqrt{21} &  32222110  &  672   & - 12 \sqrt{7}\\[1pt]
22201000  &  224   & - 4 \sqrt{21} &  33323210  &  672   & - 12 \sqrt{7}\\[1pt]
33302000  &  224   & - 4 \sqrt{21} &  13312110  &  168   & 24 \sqrt{7}\\[1pt]
22111000  &  336   & - 12 \sqrt{14} &  22223110  &  168   & 24 \sqrt{7}\\[1pt]
33222000  &  336   & - 12 \sqrt{14} &  33332210  &  168   & 24 \sqrt{7}\\[1pt]
33113000  &  336   & - 12 \sqrt{14} &  32221111  &  56    & - 8 \sqrt{21}\\[1pt]
33311000  &  224   & 12 \sqrt{21} &  33332111  &  56    & - 8 \sqrt{21}\\[1pt]
21112000  &  224   & 12 \sqrt{21} &  23323221  &  56    & - 8 \sqrt{21}\\[1pt]
32223000  &  224   & 12 \sqrt{21} &  32212111  &  224   & 4 \sqrt{21}\\[1pt]
33221100  &  168   & - 18 \sqrt{7} &  33323111  &  224   & 4 \sqrt{21}\\[1pt]
32321100  &  336   & - 3 \sqrt{14} &  33322221  &  224   & 4 \sqrt{21}\\[1pt]
33212100  &  336   & - 3 \sqrt{14} &    &   &  \\[1pt]
\end{array}
\]
and all omitted orbit coefficients vanish.

Finally, \(e_8\) is the degree-eight \(\tA\)-invariant polynomial, written in the normalized occupation basis. For $a_0+a_1+a_2+a_3=8$, define
\[
 |a_0,a_1,a_2,a_3\rangle
 =\binom{8}{a_0,a_1,a_2,a_3}^{-1/2}
 \sum_{\substack{w\in\{0,1,2,3\}^8\\n_r(w)=a_r\ (r=0,1,2,3)}}|w\rangle,
\]
where $n_r(w)$ counts the occurrences of the symbol $r$ in $w$.
The normalized symmetric invariant is
\[
\begin{aligned}
e_8={}&
\frac{\sqrt2}{7}
\bigl(
|0,0,2,6\rangle+
|0,2,6,0\rangle+
|0,6,0,2\rangle
\bigr)
\\
&-
\frac{4\sqrt5}{35}
\bigl(
|0,1,4,3\rangle+
|0,3,1,4\rangle+
|0,4,3,1\rangle
\bigr)
\\
&-
\frac{3\sqrt7}{49}
\bigl(
|1,0,0,7\rangle+
|1,0,7,0\rangle+
|1,7,0,0\rangle
\bigr)
\\
&-
\frac{\sqrt{15}}{35}
\bigl(
|1,1,2,4\rangle+
|1,2,4,1\rangle+
|1,4,1,2\rangle
\bigr)
\\
&+
\frac{\sqrt3}{7}
\bigl(
|2,0,5,1\rangle+
|2,1,0,5\rangle+
|2,5,1,0\rangle
\bigr)
+
\frac{\sqrt5}{35}|2,2,2,2\rangle
\\
&+
\frac{3\sqrt{10}}{35}
\bigl(
|3,0,3,2\rangle+
|3,2,0,3\rangle+
|3,3,2,0\rangle
\bigr)
\\
&-
\frac{\sqrt5}{35}
\bigl(
|4,0,1,3\rangle+
|4,1,3,0\rangle+
|4,3,0,1\rangle
\bigr)
\\
&+
\frac{\sqrt6}{7}|5,1,1,1\rangle
+
\frac{2\sqrt{14}}{49}|8,0,0,0\rangle .
\end{aligned}
\]
The vectors \(e_{2222},e_{44},e_8\) are normalized and mutually
orthogonal.

\section{Fine weight enumerators and the Fake Steane code}
\label{app:weight-enumerators}

The ordinary Shor--Laflamme weight enumerator does not distinguish
Steane from Fake Steane: both codes have
\[
    A(z)=1+21z^4+42z^6.
\]
A support-resolved refinement does.

Let $P$ be the projector onto a $K$-dimensional $n$-qubit code.
For a subset $\omega\subseteq\{1,\ldots,n\}$ define
\begin{equation}
\label{eq:fine-A}
    A_\omega(P)
    =
    \frac1{K^2}
    \sum_{\substack{E\in\{I,X,Y,Z\}^{\otimes n}\\
                    \supp(E)=\omega}}
    |\Tr(EP)|^2.
\end{equation}
These quantities refine the ordinary coefficient
\[
    A_j=\sum_{|\omega|=j}A_\omega
\]
and are invariant under local unitaries; this is the support-resolved
``frequency analysis'' used in~\cite{yu_graphical}.

Let $\mathcal S$ be the seven weight-four supports of the nonzero
words of the $[7,3,4]_2$ simplex code, and let $\mathcal T$ be the
other twenty-eight four-subsets.  Direct calculation gives
\[
\begin{array}{c|cc}
 &A_\omega^{\rm St}&A_\omega^{\rm F}\\
\hline
\omega\in\mathcal S
 &3&279/125\\[1mm]
\omega\in\mathcal T
 &0&24/125 .
\end{array}
\]
For both codes the remaining nonzero coefficients are
\[
    A_\varnothing=1,
    \qquad
    A_\omega=6\quad(|\omega|=6).
\]
Consequently,
\[
    7\cdot 3
    =
    7\cdot\frac{279}{125}
    +28\cdot\frac{24}{125}
    =
    21,
\]
so summing over supports reproduces the same ordinary weight
enumerator for both codes.

For a stabilizer code, every $A_\omega$ is an integer: it counts
stabilizer elements with exact support $\omega$.  The fractional
coefficients above therefore rule out stabilizer structure for the
Fake Steane code.  Since $A_\omega$ is invariant under local unitaries,
they prove the stronger statement: the Fake Steane code is not locally-unitarily equivalent
to any stabilizer code.

The two eight-qubit parent states behave in exactly the
same way.  Their ordinary enumerators agree,
\[
    A_{\Omega_0}(z)=A_{\Omega_\pi}(z)
    =1+42z^4+168z^6+45z^8,
\]
while on weight-four supports
\[
\begin{array}{c|cc}
 &A_\omega^{\Omega_0}&A_\omega^{\Omega_\pi}\\
\hline
\omega\text{ an affine plane}
 &3&279/125\\[1mm]
\omega\text{ a non-plane}
 &0&24/125 .
\end{array}
\]
Thus the Fake Steane parent is likewise not locally-unitarily
equivalent to a stabilizer state.

\section{The tetrahedral family, $ [[5,1,3]] $ code, and time reversal}
\label{app:tetrahedral}

\subsection{The invariant circle}

Following the gate convention of~\cite{exotic}, let
\[
    \mathsf Z=-iZ,
    \qquad
    \mathsf F=e^{-i\pi/4}HS^\dagger
    =
    \frac{e^{-i\pi/4}}{\sqrt2}
    \begin{pmatrix}
        1&-i\\
        1&i
    \end{pmatrix}.
\]
The facet gate $\mathsf F$ cyclically permutes the Pauli axes,
\[
    X\longmapsto Y,\qquad
    Y\longmapsto Z,\qquad
    Z\longmapsto X,
\]
and
\[
    2T=\langle \mathsf Z,\mathsf F\rangle\leq\SU(2)
\]
is the binary tetrahedral group.

Let $\PSL(2,\FF_5) $ act on the six coordinates
\[
    (\infty,0,1,2,3,4).
\]
Its weight-three words split into two orbits
\[
\begin{aligned}
 \mathcal O_{3a}={}&
 \{000111,001110,010011,011001,011100,\\
 &\hspace{14mm}100101,101001,101010,110010,110100\},\\
 \mathcal O_{3b}={}&
 \{001011,001101,010101,010110,011010,\\
 &\hspace{14mm}100011,100110,101100,110001,111000\}.
\end{aligned}
\]
Let $\mathcal O_1$ and $\mathcal O_5$ denote the words of weights one
and five.  A real orthonormal basis of the simultaneous invariant
space is
\[
 |T\rangle
 =
 \frac{-\sum_{\mathcal O_1}|x\rangle
       +\sum_{\mathcal O_5}|x\rangle}{\sqrt{12}},
 \qquad
 |S\rangle
 =
 \frac{-\sum_{\mathcal O_{3a}}|x\rangle
       +\sum_{\mathcal O_{3b}}|x\rangle}{\sqrt{20}}.
\]
The vector $|T\rangle$ lies in the $[6]$ sector and $|S\rangle$ in
the singlet sector $[3,3]$.

The transposition moment gives
\[
 p_{[6]}+p_{[3,3]}=1,
 \qquad
 p_{[6]}+\frac15p_{[3,3]}=\frac12,
\]
hence
\[
 (p_{[6]},p_{[3,3]})
 =
 \left(\frac38,\frac58\right).
\]
Therefore every state
\begin{equation}
\label{eq:tet-circle}
 |\psi_\theta\rangle
 =
 \sqrt{\frac38}|T\rangle
 +
 e^{i\theta}\sqrt{\frac58}|S\rangle
\end{equation}
has pure distance at least three.

\subsection{The two real points}

Let $\kappa$ denote complex conjugation in the computational basis and
define time reversal by
\[
    \Theta_6=(iY)^{\otimes6}\kappa.
\]
For a Hermitian Pauli string $E$,
\[
    \Theta_6E\Theta_6^{-1}
    =
    (-1)^{\wt(E)}E.
\]
Thus a time-reversal-invariant state has vanishing expectation against
every odd-weight Hermitian Pauli operator~\cite{timereversal}.

The entire circle~\eqref{eq:tet-circle} is invariant under the diagonal
$2T$ action, and in particular under $(iY)^{\otimes6}$.  At
$\theta=0,\pi$ the state is also real, so
\[
    \Theta_6|\psi_\theta\rangle=|\psi_\theta\rangle.
\]
The unitary-$2$-group construction already removes weights one and two,
while time reversal removes weight three.  Hence the two real points
have pure distance four.  This is the maximum possible pure distance
for six qubits, so both are exact distance-four states.

Away from the real locus the enhancement disappears.  Direct evaluation
of the first unconstrained Shor--Laflamme coefficient gives
\[
    A_3(\psi_\theta)
    =
    \frac{15}{2}\sin^2\theta.
\]
Thus the two real points are exactly the distance-four points of the
circle; all other points have exact pure distance three.

\subsection{Stabilizer parents and five-qubit punctures}

The two real points are stabilizer states. 
The positive real point is especially transparent.  A convenient
stabilizer generating set is
\[
\begin{aligned}
&IXZZXI,\qquad IIXZZX,\qquad IXIXZZ,\qquad IZXIXZ,\\
&-XXXXXX,\qquad -ZZZZZZ.
\end{aligned}
\]
Taking the first coordinate as the purifying qubit, puncturing it leaves
\[
    XZZXI,\qquad IXZZX,\qquad XIXZZ,\qquad ZXIXZ,
\]
the standard cyclic stabilizer presentation of the
$[[5,1,3]]$ perfect code~\cite{laflamme}.

For the negative real point $\psi_\pi$, one may similarly choose
\[
\begin{aligned}
&IYZZYI,\qquad IIYZZY,\qquad IYIYZZ,\qquad IZYIYZ,\\
&-YYYYYY,\qquad -ZZZZZZ.
\end{aligned}
\]
Puncturing the first coordinate gives
\[
    YZZYI,\qquad IYZZY,\qquad YIYZZ,\qquad ZYIYZ.
\]
This is the image of the standard five-qubit stabilizer under
$S^{\otimes5}$, since $SXS^\dagger=Y$ and $SZS^\dagger=Z$.
Hence the two real points puncture to two Clifford-equivalent
presentations of the five-qubit perfect code.

\end{document}